\def\BibTeX{{\rm B\kern-.05em{\sc i\kern-.025em b}\kern-.08em
    T\kern-.1667em\lower.7ex\hbox{E}\kern-.125emX}} 
\newtheorem{theorem}{Theorem}
\newcommand{\Py}{{\mathbb P}}
\newcommand{\E}{{\mathbb E}}
\newcommand{\ltlf}{\textsc{LTL}_f}
\providecommand{\M}{{\mathscr M}} %
\providecommand{\Rew}{{\mathcal R}} %
\DeclareMathOperator*{\argmax}{argmax}
\newcommand\numberthis{\addtocounter{equation}{1}\tag{\theequation}} %
\newcommand\Mark[1]{\textsuperscript#1}
\begin{document}

\title{\LARGE {\bf Compositional Planning for Logically Constrained\\Multi-Agent Markov Decision Processes}}
\author{%
    \normalsize Krishna C. Kalagarla*\Mark{{1,2}}, Matthew Low*\Mark{1}, Rahul Jain\Mark{1},  Ashutosh Nayyar\Mark{1}, Pierluigi Nuzzo\Mark{{1,3}}\\
    \thanks{*Equal contribution. The authors wish to acknowledge the partial support of
    the National Science Foundation under awards CNS 1846524,
    ECCS 2139982, ECCS 2025732, and ECCS 1750041.}%
\Mark{1}Ming Hsieh Department of Electrical and Computer Engineering, University of Southern California, Los Angeles \\%
\Mark{2}Department of Electrical and Computer Engineering, University of New Mexico, Albuquerque \\%
\Mark{3}Department of Electrical Engineering and Computer Sciences, University of California, Berkeley, CA \\%
Email: \{kalagarl,lowmatth,rahul.jain,ashutosn,nuzzo\}@\{\Mark{1}usc.edu, \Mark{2}unm.edu, \Mark{3}eecs.berkeley.edu\}}%
\maketitle

\begin{abstract}
Designing control policies for large, distributed systems is challenging, 
especially in the context of critical, temporal logic based specifications (e.g., safety) that must be met with high probability.
Compositional methods for such problems are needed for scalability, yet
relying on worst-case assumptions
for decomposition tends to be overly conservative.
In this work, we use the framework of Constrained Markov Decision Processes (CMDPs) to provide an assume-guarantee based decomposition for synthesizing decentralized control
policies, subject to logical constraints in a multi-agent setting. The returned policies are guaranteed to satisfy the constraints with high probability
and provide a lower bound on the achieved objective reward.  
We empirically find the returned policies to achieve near-optimal rewards while enjoying an order of magnitude reduction in problem size and execution time. 

\end{abstract}

\section{Introduction}
\label{sec:intro}

Our increasingly connected and ``smart'' world
calls for
compositional methods to
design control policies for large, distributed systems in a scalable manner.
Smart power grids and intersections are applications in which
a single, centralized approach is not possible~\cite{Lo_Ansari_2013,Di_Silvestre_Favuzza_Riva_Sanseverino_Zizzo_2018}.
Moreover, these are critical systems, which must be carefully controlled
to realize their intended behavior.
Constrained Markov decision processes (CMDPs)~\cite{Puterman:1994:MDP:528623,altman1999constrained} are a powerful mathematical model for representing sequential decision-making
tasks subject to certain constraints under uncertainty,
making them a viable choice to model such systems.
CMDPs can be infused with mission specifications, expressed in logic
languages such as finite linear temporal logic (\(\ltlf\))~\cite{de2013linear,zhu2017symbolic,kalagarla2024optimal}, to ensure that returned policies respect and achieve the 
mission to a user-specified probability threshold.
However, centralized approaches to
solve CMDPs with multiple agents suffer from the
combinatorial explosion of the global state space.
Directly solving the monolithic CMDP entails solving an expanding constrained optimization problem, which, even for two agents,
quickly becomes untenable for larger problems.

Approaches to decompose the monolithic optimization problem into more manageable pieces
are of significant interest~\cite{Daoui_Abbad_Tkiouat_2010}.
Worst-case or robust control
decomposes the problem in
an adversarial manner, where each agent assumes the worst-case behavior of the other agents
with respect to (w.r.t.) some objective function~\cite{Shen_Ye_Feng_2021,zhou1998essentials}.
The opposite is to optimistically assume cooperation among agents to decompose the joint optimization problem~\cite{Castellini_Oliehoek_Savani_Whiteson_2021}.
These methods are most commonly seen in the context of multi-agent, unconstrained optimal control and reinforcement learning~\cite{Busoniu_Babuska_De_Schutter_2010}.
Compositional methods capable of handling logical specifications,
however, have received scant attention.
To the best of our knowledge, this paper is the first to provide a compositional
strategy to solve logically constrained Markov decision processes (MDPs) in the multi-agent setting.

Neither worst-case nor pure optimism are
decomposition paradigms
reflective of real-world interactions between agents, as perfect cooperation is often unrealizable and fully adversarial methods
tend to be overly conservative.
Instead, our approach takes a middle-ground approach inspired by assume-guarantee (AG) reasoning and contract-based design~\cite{ContractMono,Bauer12}.
Consider the scenario for a pair of agents.
In our framework, each agent \emph{assumes} that the other will obey its corresponding logical constraints
with some high probability.
Under this assumption,
the ego agent finds an optimal policy by considering worst-case behavior of its partner w.r.t.
the joint objective reward,
subject to the before mentioned constraint.
The returned policies are \emph{guaranteed} to satisfy the ego agent's logical constraints
with high probability
and provide
a lower bound on the achieved joint reward.
Mutual understanding of undesirable outcomes
enables efficient synthesis of provably safe, optimal policies with an empirically
tight optimality gap.

In this paper, we (1) introduce a novel, AG-based decomposition of the monolithic CMDP formulation, (2) show how this 
formulation can be efficiently transformed and solved as a linear program (LP), and
(3) validate our methodology on two case studies to demonstrate the computational advantages of our modular optimal policy
synthesis approach while ensuring provable logical constraint satisfaction.

\section{Preliminaries}
\label{sec:prelim}
\subsubsection*{Notation}
Real and natural numbers are denoted by $\mathbb{R}$ and $\mathbb{N}$, respectively.
General probabilities are specified by \(\Py\), while transition probability functions use \(P\).
We use $h \in \left[i:j\right]$ (where $\left[i:j\right]$ is the inclusive sequence of integers from $i$ to $j$) to denote a time step inside an episode.
The indicator function $\mathds{1}_{s_1}(s)$ evaluates to $1$ when $s = s_1$ and 0 otherwise.
The probability simplex over the set $S$ is denoted by $\Delta_{S}$.
For a string $s$, $|s|$ denotes the length of the string.
The Cartesian product over sets is defined by $\times$, while $\cdot$ is used for standard multiplication.
Superscripts on MDP elements denote the agent index and product status, while subscripts denote the current time step.

\subsection{Labeled Finite-Horizon MDPs}

We consider labeled finite-horizon MDPs~\cite{Puterman:1994:MDP:528623},
formally defined by a tuple $\M = (S,A,H,s_{1},P,r,AP,L)$, where $S$ and $A$ denote the finite state and action spaces, respectively.
The agent interacts with the environment in episodes of length $H$, with each episode starting from the same initial state $s_{1}$.
The non-stationary transition probability is $P$, where $P_{h}(s'|s,a)$ is the probability of transitioning to state $s'$ upon taking action $a$ in state $s$ at time step $h \in \left[1:H\right]$.
The deterministic, non-stationary reward of taking action $a$ in state $s$ at time step $h$ is $r_{h}(s,a)$.
$AP$ is a set of atomic propositions, e.g., indicators of the truth value for the presence of an obstacle or goal. $L: {S} \to 2^{AP}$ is a labeling function which indicates the set of atomic propositions which hold true in each state, e.g., $L(s) = \{y\}$ indicates that only the atomic proposition $y$ is true in state $s$.

A non-stationary randomized policy $\pi = (\pi_{1}, \ldots , \pi_{H}) \in \Pi$, where $\pi_{i} : S \to \Delta_{A}$, maps each state to a probability distribution over the action space.
A \emph{run} ${\xi}$ of the MDP is the sequence of states and actions $(s_1,a_1)\ldots (s_{H},a_{H})$.
The total expected reward of an episode associated with a policy $\pi$ and reward function $r$ is given by
\begin{align}\label{eq:total_reward}
 \mathcal{R}^{\M}_{\pi}(r) &= \E_\pi^\M\left[\sum_{i=1}^H r_i(s_i,a_i)\right].
\end{align} %
In this paper, we will make use of constrained MDPs (CMDPs)~\cite{altman1999constrained},
which additionally include a
constraint reward function $c_h(s,a)$ at each time step $h$. The total expected constraint reward in an episode under a policy $\pi$ is defined in the same manner as~\eqref{eq:total_reward} with $r_h$ replaced by $c_h$. The goal of the CMDP problem is to find a policy $\pi^*$ that maximizes the objective total reward $\mathcal{R}^{\M}_{\pi}(r)$ while ensuring that the total constraint reward is above a threshold $l$, i.e.,
\begin{align}
\begin{split}
 \pi^* = \argmax_{\pi \in \Pi}\quad& \mathcal{R}^{\M}_{\pi}(r) \\
 \mathrm{s.t.} \quad & \Rew^\M_\pi(c) \geq l.
\end{split}
\label{eqn:cmdp-problem-def}
\end{align}

\subsection{Occupancy Measures}
\label{sec:prelim-occu}
Occupancy measures~\cite{altman1999constrained,aaai2021} allow for an alternative representation of the set of non-stationary, randomized policies and the expected return of such policies.
CMDPs can be solved in terms of occupancy measures, as they enable the search for an optimal
policy~\eqref{eqn:cmdp-problem-def} to be rewritten as a linear program (LP).
The occupancy measure $q^{\pi}$ of a policy $\pi$ in a finite-horizon MDP is defined as the expected number of visits to a state-action pair $(s,a)$ in an episode at time step $h$.
Formally, $q^{\pi}_{h}(s,a) = \Py\left[S_h= s,A_h=a|S_1 = s_1, \pi\right]$.

The occupancy measure $q^{\pi}$ of a policy $\pi$ satisfies linear constraints expressing non-negativity, the conservation
of probability flow through the states, and the initial state conditions.
The space of the occupancy measures satisfying these constraints is denoted by $\mathbb{Q}_\M$ and is convex~\cite{altman1999constrained}. A policy $\pi$ generates an occupancy measure $q \in \mathbb{Q}_\M$ if
\begin{equation}\label{eq:occu}
\pi_h(a|s) = \frac{q_h(s,a)}{\sum_{b}q_h(s,b)}, \quad \forall (s,a,h).
\end{equation}
Thus, there exists a
non-stationary, randomized policy for each occupancy measure in $\mathbb{Q}_\M$ and \emph{vice versa}.
Further, the total expected reward of an episode under policy $\pi$ with respect to reward function $r$ can be expressed in terms of the occupancy measure as $\mathcal{R}^{\M}_{\pi}(r) = \sum_{h,s,a}q^{\pi}_{h}(s,a)r_h(s,a)$.

\subsection{Finite Linear Temporal Logic Specification}

We use $\ltlf$~\cite{de2013linear}, a temporal extension of propositional logic. 
This is a variant of linear temporal logic (LTL)~\cite{ltl_1977} interpreted over finite traces.
$\ltlf$ is flexible enough to express complex finite-duration task specifications, while remaining unambiguous and computer readable.
These traits make it an attractive candidate for incorporation with reward functions
in a specify-then-synthesize design paradigm~\cite{2018nuzzoCHASE,2014nuzzoPowerSystem}.
Given a set $AP$ of atomic propositions, $\ltlf$ formulae are constructed inductively as follows: 
\begin{equation*}
 \varphi := \mathsf{ true } \ | \ a \ | \ \neg \varphi \ | \ \varphi^1 \wedge \varphi^2 \ | \ \textbf{X} \varphi \ | \ \varphi^1 \textbf{U} \varphi^2, 
\end{equation*}
where $a \in AP$; $\varphi$, $\varphi^1$, and $\varphi^2$ are $\ltlf$ formulae; $\wedge$ and $\neg$ are the logic conjunction and negation; and $\textbf{U}$ and $\textbf{X}$ are the \emph{until} and \emph{next} temporal operators. Additional temporal operators such as \emph{eventually} ($\textbf{F}$) and \emph{always} ($\textbf{G}$) are derived as $\textbf{F} \varphi := \mathsf{ true } \textbf{U} \varphi$ and $\textbf{G} \varphi := \neg \textbf{F} \neg \varphi$.
Formulae are interpreted over finite-length words $w = w_1 \ldots w_{|w|}$,
where each letter $w_i \subseteq AP$.
When $\varphi$ is $\textit{true}$ for $w$ at step $i \in [1:|w|]$, we write $w,i \models \varphi$.
A formula $\varphi$ is \textit{true} in $w$, written $w \models \varphi$, iff $w,1 \models \varphi$.

Given an MDP $ \M$ and an $\ltlf$ formula $\varphi$, a run $\xi = (s_1,a_1)\ldots (s_H,a_H)$ of the MDP under policy $\pi$ is said to satisfy $\varphi$ if the
word $w = L(s_1)\ldots L(s_H)\in {(2^{AP})}^{H}$ generated by the run satisfies $\varphi$. The probability that a run of $\M$ satisfies $\varphi$ under policy $\pi$ is denoted by $\Py_{\pi}^{\M}(\varphi)$.

\vspace*{-1mm}
\subsection{Deterministic Finite Automaton (DFA)}

The language defined by an $\ltlf$ formula, i.e., the set of words satisfying the formula, can be captured by a Deterministic Finite Automaton (DFA)~\cite{zhu2017symbolic}. 
We denote a DFA by a tuple $\mathscr{A} = (Q, \Sigma, q_0, \delta, F)$, where $Q$ is a finite set of states, $\Sigma$ is a finite alphabet, $q_0 \in Q$ is an initial state, $\delta :
Q \times \Sigma \to {Q}$ is a transition function, and $F \subseteq Q$ is the set of accepting states.
A run $\xi_{\mathscr{A}}$ of $\mathscr{A}$ over a finite word $w = w_1\ldots w_n$ (with $w_i \in \Sigma$) is a sequence of states $q_0q_1\ldots q_{n} \in Q^{n+1}$ such that $q_{i+1} = \delta(q_i,w_{i+1})$ for $i = 0,\ldots,n-1$. A run $\xi_{\mathscr{A}}$ is accepting if and only if (iff)
 $q_{n} \in F$.
A word $w$ is accepted by $\mathscr{A}$ iff
the run $\xi_{\mathscr{A}}$ of $\mathscr{A}$ on $w$ is accepting.
Finally, we say that an $\ltlf$ formula is equivalent to a DFA $\mathscr{A}$ iff the language defined by the formula is the set of words accepted by $\mathscr{A}$. For any $\ltlf$ formula $\varphi$ over $AP$, we can construct an equivalent DFA with input alphabet $2^{AP}$.

\SetBgContents{\color{blue}{\tiny \Longstack{PREPRINT - accepted at IEEE Conf. on Decision and Control (CDC), 2024.}}}
\SetBgPosition{4.5cm,1cm}

\section{Problem Formulation}
\label{sec:problem-formulation}
We first describe the optimal policy synthesis problem under $\ltlf$ constraints for one agent and then present
our 2-player problem formulation.
\subsubsection*{Single Player MDP}
Given a labeled finite-horizon MDP $\M$ and an $\ltlf$ specification $\varphi$, our objective is to design a policy $\pi$ that maximizes the total expected reward $\mathcal{R}^{\M}_{\pi}(r)$ while ensuring that the probability $\Py^{\M}_\pi(\varphi)$ of satisfying the specification $\varphi$ is at least $1-\delta$.
More formally, we would like to solve the following constrained optimization problem:

\begin{equation}\tag{P1} \label{sprobform}
    \begin{aligned}
   \textbf{LTL$_f$-MDP:}~~~\underset{\pi}{\text{ max }} \quad & \mathcal{R}^{\M}_{\pi}(r),\\ \mathrm{s.t.} \quad  & \Py_{\pi}^{\M}(\varphi) \geq  1-\delta.
\end{aligned}
\end{equation}

\subsubsection*{2-Player MDP}
Extending to the 2-player setting, we consider
two MDPs
$\M^i = (S^{i},A^{i},H,s^{i}_1,P^{i},AP^{i},L^{i})$, for $i \in \{1,2\}$,
with independent
transition probabilities.
The two MDPs are connected by a \emph{joint} reward function $r^J_h: (S^{1} \times S^{2})\times(A^{1} \times A^{2}) \to \mathbb{R} $, where $r^J_{h}(s^J_h,a^J_h)$ is the reward of taking joint action $a^J_h = (a^{1}_h,a^{2}_h)$ in joint state $s^J_h = (s^{1}_h,s^{2}_h)$ at time step $h$.
Atomic propositions $AP^{i}$ are assumed disjoint without loss of generality. 

The objective of the 2-player problem is to design a joint policy \(\pi^J\) that maximizes the total expected objective 
reward while satisfying the joint specification \(\varphi^J\) with probability at least \(1-\delta\).
The joint specification is the
conjunction of the two single-player specifications: \(\varphi^J=\varphi^1\land\varphi^2\), where $\varphi^i$ is a specification defined over the run of $\M^i$.
\begin{equation}\tag{P2} \label{sprobform-2player}
\begin{aligned}
   \textbf{LTL$_f$-MDP-2-Player:}~~~\underset{\pi}{\text{ max }} \quad & \mathcal{R}^{\M^J}_{\pi}(r^J),\\ \mathrm{s.t.} \quad  & \Py_{\pi^J}^{\mathscr{\M}_J}(\varphi^J) \geq  1-\delta. \\
\end{aligned}
\end{equation}
We use \(\M^J\) to denote the joint MDP which incorporates the states, actions, and transitions of the 
component MDPs \(\M^1\) and \(\M^2\). Details of this construction follow in Section~\ref{sec:solution_approach-B}.

\section{Solution Approach}
\label{sec:solution_approach}
We first describe the monolithic approach
to solve the joint problem formulation~\ref{sprobform-2player}.
This method
combines the two agents to obtain
a centralized policy over the
joint state-action space.
This approach yields an LP by utilizing occupancy measures and product CMDP to join the logically specified
DFA with the probabilistic MDP~\cite{kalagarla2024optimal}.
Our AG-based, decentralized approach follows in Section~\ref{sec:assume-guarantee-transformation}.
\vspace*{-.5mm}

\vspace*{-1.2mm}
\subsection{Framing 2-Player MDP as Joint MDP}
\label{sec:solution_approach-B}
Two CMDPs (\(\M^1, \M^2\)) corresponding to the agents in the 2-player setting can be transformed in a single, joint MDP
\(\M^J\) by the following procedure.
The joint state and action spaces are computed by the Cartesian product
of the component state and action spaces, i.e., \(S^{J} = S^{1} \times S^{2}\), \(A^{J} = A^{1} \times A^{2}\).
The initial state \(s^{J}_1\) is similarly defined.
Leveraging the independent transitions of the two MDPs, the joint transition model can be computed by direct
multiplication:
\begin{align}
\begin{split}
    &P_h\left[s^J_{h+1}|s^J_{h},a^J_h\right] \\
    &= P_h\left[(s^{1}_{h+1},s^{2}_{h+1})|(s^{1}_{h},s^{2}_{h}),(a^{1}_h,a^{2}_h)\right] \\
    &=P_h\left[s^{1}_{h+1},|s^{1}_{h},a^{1}_h\right] \cdot P_h\left[s^{2}_{h+1},|s^{2}_{h},a^{2}_h\right].
\end{split}
\label{eqn:joint-prob-model}
\end{align}
The joint labeling function is defined as the union of the component MDP labels as follows: 
\begin{align}
        L^{J}(s^J) = L^{J}((s^{1}, s^{2}))= L^{1}(s^{1}) \cup  L^{2}(s^{2}).
    \label{eqn:get-joint-label}
\end{align}
The joint \(\ltlf\) specification \(\varphi^J\) is converted into a DFA, enabling the computation of the joint
product CMDP \(\M^{J\times}\) which encapsulates the joint objective and constraint rewards.

With this joint MDP representing both agents,
an optimal policy can be found by applying the single player procedure detailed in Section~\ref{sec:solution_approach-LP-formulation}.

\subsection{Solution Procedure for a Single Player MDP}
\label{sec:soution_approach-single}

Given the labeled finite-horizon MDP $ \M$ and a DFA $\mathscr{A}$ capturing the $\ltlf$ formula $\varphi$,
we construct a constrained product MDP $\M ^{\times} = (S^{\times},  A^{\times},H^{\times},s_{1}^{\times},P^{\times},r^{\times},c^{\times})$  which incorporates the transitions of $\M$  and $ \mathscr{A}$, the reward function of $\M$, and the acceptance set of $\mathscr{A}$.

In the constrained product MDP $\M ^{\times}$, $S^{\times} = ({S} \times Q)$ is the set of states, $A^{\times} = {A}$ is the action set, and $s_{1}^{\times} = (s_1,q_0)$ is the initial state. The horizon length $H^\times$ is $H+1$. For each $s,s'\in S$; $q,q' \in Q$; and $a \in A$, we define the transition function $P^{\times}_h((s',q')|(s,q),a)$ at time-step $h \in \left[1:H\right]$ as
\begin{equation} \label{eq:prodtrans}
\begin{aligned}
P^{\times}_h((s',q')|(s,q),a) = & \begin{cases} P_h(s'|s,a), &\mbox{if } q' = \delta(q,L(s)) \\
0, & \text{otherwise.}
\end{cases}\\
\end{aligned}    
\end{equation}
The reward functions are defined as 
\begin{align}
    r^{\times}_{h}((s,q),a) = \begin{cases} {r}_h(s,a),\quad &\forall s,q,a,h \in \left[1:H\right] \\
    0 \quad &\mbox{if } h = H+1 \end{cases}
\end{align}
\begin{align}
    c^{\times}_{h}((s,q),a) = \begin{cases} 1,\quad &\text{if } h=H+1 \text{ and } q \in F\\
    0 \quad &\text{otherwise.} \end{cases}
\end{align}

We thus define the two total expected reward functions on the product MDP: (i) an expected \emph{objective} reward $\mathcal{R}^{\M^\times}_{\pi}(r^\times)$
associated with the original MDP $\M$, and (ii) an expected \emph{constraint} reward $\mathcal{R}^{\M^\times}_{\pi}(c^\times)$
associated with reaching an accepting state in the DFA $\mathscr{A}$.
For the constrained product MDP \(\M^\times\), we are interested in solving the following constrained optimization problem:
\begin{equation} \tag{P3}\label{prodprobform}
    \begin{aligned}
   \textbf{C-MDP:}~~~\underset{\pi}{\text{ max }} \quad & \mathcal{R}^{\M^\times}_{\pi}(r^\times)\\ \mathrm{s.t.} \quad  & \mathcal{R}^{\M^\times}_{\pi}(c^\times) \geq  1-\delta.
\end{aligned}
\end{equation}
\begin{theorem}[Equivalence of Problems \eqref{sprobform} and \eqref{prodprobform}]\label{equiv1}
For any policy $\pi$, we have
\begin{align}
    \mathcal{R}^{\M^\times}_{\pi}(r^\times) &= \mathcal{R}^{\M}_{\pi}(r)\\
    \mathcal{R}^{\M^\times}_{\pi}(c^\times) &= \Py_{\pi}^{\M}(\varphi).
\end{align}
Therefore, a policy $\pi^*$ is an optimal solution in Problem \eqref{sprobform} if and only if it is an optimal solution to Problem \eqref{prodprobform}.
\end{theorem}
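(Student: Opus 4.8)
The plan is to prove the two reward identities for an arbitrary fixed policy $\pi$, after which the equivalence of the optimization problems follows immediately, since the objectives and constraints of \eqref{sprobform} and \eqref{prodprobform} become pairwise identical. The backbone of the argument is the observation that, because the DFA transition map $\delta$ is deterministic, augmenting each MDP state $s$ with a DFA state $q$ introduces no additional randomness: the automaton component is a deterministic function of the history $s_1,\dots,s_{h}$. Consequently, the marginal trajectory distribution of the $\M$-component inside $\M^\times$ coincides with the trajectory distribution of $\M$ itself under $\pi$.

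First I would establish a trajectory correspondence. Starting from $s_1^\times = (s_1,q_0)$ and unrolling \eqref{eq:prodtrans}, for a run whose $\M$-states are $s_1,\dots,s_H$ the DFA component at product step $h$ is exactly $q_{h-1}$, where $q_i = \delta(q_{i-1},L(s_i))$ --- that is, the DFA run $q_0 q_1 \cdots q_H$ on the word $w = L(s_1)\cdots L(s_H)$ generated by the run. Because the product transition factors as $P_h(s'|s,a)$ times the indicator $\mathds{1}\{q' = \delta(q,L(s))\}$, which selects a single successor DFA state, a short induction on $h$ shows that $\Py_\pi^{\M^\times}[S_h = s, A_h = a] = \Py_\pi^{\M}[S_h = s, A_h = a]$ for all $h \in [1:H]$.

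For the objective identity I would expand $\mathcal{R}^{\M^\times}_\pi(r^\times)$ using the occupancy-measure form and the definition of $r^\times$: since $r^\times_h((s,q),a) = r_h(s,a)$ for $h \in [1:H]$ and $r^\times_{H+1} \equiv 0$, the trajectory correspondence yields $\mathcal{R}^{\M^\times}_\pi(r^\times) = \sum_{h=1}^{H}\E_\pi^{\M}[r_h(s_h,a_h)] = \mathcal{R}^{\M}_\pi(r)$. For the constraint identity, note that $c^\times$ is supported only on step $H+1$, where it evaluates to $1$ precisely when the DFA component lies in $F$. By the trajectory correspondence that component is $q_H$, the state reached after reading the entire word $w$; hence $q_H \in F$ iff $w$ is accepted by $\mathscr{A}$ iff the run satisfies $\varphi$. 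Taking expectations gives $\mathcal{R}^{\M^\times}_\pi(c^\times) = \Py_\pi^{\M^\times}[q_H \in F] = \Py_\pi^{\M}(\varphi)$.

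The step I expect to require the most care is the index bookkeeping around the padded horizon $H+1$: one must verify that the DFA state appearing at the final product step is $q_H = \delta(q_{H-1},L(s_H))$, so that acceptance of the \emph{full} length-$H$ word is captured rather than a length-$(H-1)$ prefix. A secondary point worth stating explicitly is that although a product-MDP policy may condition on $q$, determinism of $\delta$ makes $q$ a function of past states, so this conditioning does not perturb the marginal over $\M$-trajectories; this is what legitimizes the phrase ``for any policy.'' With both identities in hand, a policy is feasible for \eqref{sprobform} iff it is feasible for \eqref{prodprobform} and attains the same objective value there, so the two problems share the same set of optimal policies.
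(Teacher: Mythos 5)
The paper states Theorem~\ref{equiv1} without any proof (it is presented as a known consequence of the product-CMDP construction, deferring to the cited prior work), so there is no in-paper argument to compare yours against; judged on its own, your proof is correct and is the standard one. You get the two essential points right: (i) because $\delta$ is deterministic and the product transition \eqref{eq:prodtrans} factors as $P_h(s'|s,a)$ times an indicator selecting the unique successor automaton state, the $(s,a)$-marginals of the product process agree with those of $\M$, which immediately gives $\mathcal{R}^{\M^\times}_{\pi}(r^\times)=\mathcal{R}^{\M}_{\pi}(r)$ since $r^\times$ copies $r$ on $[1:H]$ and vanishes at $H+1$; and (ii) the index bookkeeping at the padded step is handled correctly --- the state at product step $H+1$ is $(s_{H+1},q_H)$ with $q_H=\delta(q_{H-1},L(s_H))$, i.e., the terminal state of the DFA run on the full word $L(s_1)\cdots L(s_H)$, so the expectation of $c^\times$ is exactly $\Py_{\pi}^{\M}(\varphi)$. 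One phrasing I would tighten: saying that conditioning on $q$ ``does not perturb the marginal'' is slightly misleading, since a product policy genuinely induces a history-dependent policy on $\M$; the precise claim is that the map between product-MDP policies and (history-dependent) policies of $\M$ is value-preserving in both directions, which is what the final ``if and only if'' about optimal solutions actually needs. This is a presentational nit, not a gap.
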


\label{sec:solution_approach-LP-formulation} %
\subsubsection{Linear Programming Formulation}
As described in Section~\ref{sec:prelim-occu}, the constraints corresponding to the occupancy measure definition
are created as~\eqref{eqn:LP-single-non-neg-occupancy},~\eqref{eqn:LP-single-prob-flow-occupancy},~\eqref{eqn:LP-single-init-state} below:
\begin{align}
& q_{h}(s,a) \geq 0  \quad \forall s\in S^{\times}, \ \forall a \in A^{\times}, \ \forall h \in [1:H^{\times}],  \label{eqn:LP-single-non-neg-occupancy} \\
& \begin{multlined}[b]
 \sum_{a \in A^{\times}}q_{h}(s,a)=\smashoperator{\sum_{s' \in S^{\times}, a' \in A^{\times}}}P^{\times}_{h-1}(s|s',a')q_{h-1}(s',a'), \\
 \hfill \quad\quad \forall s\in S^{\times},  \forall h \in [2:H^{\times}],
\end{multlined}\label{eqn:LP-single-prob-flow-occupancy} \\
& \sum_{a \in A^{\times}}q_{1}(s,a) =
\mathds{1}_{s_{1}^{\times}}(s), \quad  \forall s\in S^{\times}. \label{eqn:LP-single-init-state}
\end{align}
Additionally, the constraint reward should achieve the specified threshold, i.e., 
\begin{equation}
\smashoperator[lr]{\sum_{s \in S^{\times}, a \in A^{\times}, h \in [1:H^{\times}]}} c_{h}^{\times}(s,a)q_{h}(s,a) \geq 1 - \delta. \label{eqn:LP-single-constraint-reward}
\end{equation}
Finally, the LP to maximize the expected reward becomes
\begin{align*}
    q^*\quad=\argmax_q  & \smashoperator[r]{\sum_{s \in S^{\times}, a \in A^{\times}, h \in [1:H^{\times}]}} 
                            r_{h}^{\times}(s,a)q_{h}(s,a), \numberthis \label{eqn:LP-single-objective-function} \\
    \mathrm{s.t.} \quad & \eqref{eqn:LP-single-non-neg-occupancy}, 
                            \eqref{eqn:LP-single-prob-flow-occupancy},
                            \eqref{eqn:LP-single-init-state},
                            \eqref{eqn:LP-single-constraint-reward}
\end{align*}
The optimal solution $q^*$ of the above LP can be used to obtain the optimal policy $\pi^*$ using~\eqref{eq:occu}.

\subsection{Assume-Guarantee Transformation}
\label{sec:assume-guarantee-transformation}
When applied in the 2-player setting,
the policies obtained by the approach of Section~\ref{sec:solution_approach-B} are necessarily centralized,
where each agent must know the current state of all  agents.
The dimension of the occupancy measure in $\M^{J\times}$ at each time grows rapidly:
\begin{align}
\begin{split}
    N = |S^{1} | \cdot |S^{2}| \cdot |A^{1}| \cdot|A^{2}| \cdot|Q^{J}|,
\end{split}
\label{eqn:joint-problem-size}
\end{align}
resulting in significantly larger problems.
To overcome the computational burden and the need for centralization in the joint MDP approach, we introduce an (AG) 
approach to decompose the 2-player CMDP problem into two, smaller optimization problems.
Instead of synthesizing a single joint policy \(\pi^J\) that controls both agents, our approach produces two
decentralized
policies \(\pi^{1}, \pi^{2}\) corresponding to each agent.
These smaller, modular policies
allow each agent to act only on local information for independent operation.
Decentralized control tends to confer additional benefits such as reduced latency~\cite{Bakule_2008,Siljak_2011}.

Decomposition for decentralized policy synthesis is often done by assuming the worst-case policy for the other agent.
However, this approach tends to produce overly conservative policies.
The key difference of our approach is the use of logical constraints, in an AG framework,
to reduce this conservatism.
The ego agent is aware of its own logical constraints as well as those specified on the other agent. 
By assuming that the other agent will obey its constraints, the size of possible policy choices
for the other agent is limited.
The understanding of this restriction on the other agent's behavior mitigates the conservatism that typically
hinders worst-case decomposition~\cite{Shen_Ye_Feng_2021}.

This semi-cooperative, semi-competing framework naturally captures many realistic scenarios between agents.
For example, two cars interacting at an intersection can be modeled in this way.
Each agent assumes that the other will obey the traffic laws (i.e., each agent's specification) with some
high probability, but each agent or driver selfishly looks to minimize its own commute time (objective reward).%
\subsection{Formalization of Assume-Guarantee Decomposition}
\label{sec:formalization-of-AG-decomposition}
We describe the AG procedure through the lens of one agent, as the
product CMDP \(\M^{1\times} = \M^1 \times \mathscr{A}^1\) where $\mathscr{A}^1$ is the
DFA corresponding to the specification $\varphi^1$.
The mirrored procedure can be inferred for the
second agent \(\M^{2\times}\).
\subsubsection*{\bfseries Independent AG Policy Synthesis}
The agent \(\M^{1\times}\) assumes \(\M^{2\times}\) will follow some unknown policy \(\pi^2\) which satisfies the specification \(\varphi^2\)
with probability at least \(1 - \delta^2\).
We guarantee that the returned policy \(\pi^{1}\)
for \(\M^{1\times}\) satisfies its logical
constraint \(\varphi^1\) with probability at least \(1-\delta^1\) by construction, i.e., 
\begin{align}
\begin{split}
    \mathrm{Assume\ 1:} \quad & \Rew^{\M^{2\times}}_{\pi^2} (c^{\M^{2\times}}) \geq 1-\delta^2 \\
    \mathrm{Guarantee\ 1:} \quad & \Rew^{\M^{1\times}}_{\pi^1} (c^{\M^{1\times}}) \geq 1-\delta^1 \\
\end{split}
\tag{P4}\label{eqn:ag-framework1}
\end{align}
The second agent takes a symmetric view, i.e., 
\begin{align}
\begin{split}
    \mathrm{Assume\ 2:} \quad & \Rew^{\M^{1\times}}_{\pi^1} (c^{\M^{1\times}}) \geq 1-\delta^1 \\
    \mathrm{Guarantee\ 2:} \quad & \Rew^{\M^{2\times}}_{\pi^2} (c^{\M^{2\times}}) \geq 1-\delta^2. \\
\end{split}
\tag{P5}\label{eqn:ag-framework2}
\end{align}\\
Notice that \eqref{eqn:ag-framework1} and \eqref{eqn:ag-framework2} are consistent, in the sense that $\mathrm{Guarantee\ 1}$ ensures that $\mathrm{Assume\ 2}$ is valid and $\mathrm{Guarantee\ 2}$ ensures that $\mathrm{Assume\ 1}$ is valid.

\begin{theorem}[Soundness of AG Policy Composition]\label{thm:ag-mdp-sound}
Let \(\pi^1\) and \(\pi^2\) be solutions to~\eqref{eqn:ag-framework1} and~\eqref{eqn:ag-framework2}, respectively, with
\(\delta^1, \delta^2\) and \(\delta\) such that \(\delta^1 + \delta^2 \leq \delta\).
Then, the joint execution of the independent policies as \(\pi = (\pi^1, \pi^2)\) is guaranteed to satisfy
the conjoined specification \(\varphi^J\) for the joint CMDP \(\M^{J\times}\) with probability at least \(1 - \delta\), i.e.,
\begin{align}
\begin{split}
    \Rew^{\M^J}_{\pi} (c^{\M^{J\times}})\geq 1-\delta.
\end{split}
\end{align}
\end{theorem}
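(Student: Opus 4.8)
The plan is to reduce the claim to a statement about the probability of satisfying $\varphi^J$ and then dispatch it with an elementary union bound, after first establishing that decentralized execution in the joint MDP does not alter the per-agent satisfaction probabilities. First I would invoke Theorem~\ref{equiv1}, applied to the joint product CMDP $\M^{J\times}$, to rewrite the quantity of interest as $\Rew^{\M^J}_{\pi}(c^{\M^{J\times}}) = \Py^{\M^J}_{\pi}(\varphi^J)$. Likewise, the same equivalence applied to each component product CMDP turns Guarantee~1 and Guarantee~2 into the probabilistic statements $\Py^{\M^1}_{\pi^1}(\varphi^1) \geq 1-\delta^1$ and $\Py^{\M^2}_{\pi^2}(\varphi^2) \geq 1-\delta^2$. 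The target inequality thus becomes $\Py^{\M^J}_{\pi}(\varphi^J) \geq 1-\delta$.

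The key intermediate step is to show that, under the joint decentralized policy $\pi = (\pi^1,\pi^2)$, the marginal law of each agent's run in $\M^J$ coincides with the law of its run in the standalone MDP under the corresponding policy. This rests on three facts. First, the joint transition kernel factorizes as a product of the component kernels, by~\eqref{eqn:joint-prob-model}. Second, the policy is decentralized, so agent $i$ samples $a^i_h$ from $\pi^i_h(\cdot\,|\,s^i_h)$ using only its local state; hence the agent-$i$ component $(s^i_1,a^i_1)\ldots(s^i_H,a^i_H)$ of a joint run evolves as a Markov chain governed exactly by $P^i$ and $\pi^i$, independently of the other agent, so its marginal distribution equals the run distribution of $\M^i$ under $\pi^i$. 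Third, because the atomic proposition sets $AP^1$ and $AP^2$ are disjoint and the joint label is $L^{J}(s^J) = L^1(s^1)\cup L^2(s^2)$ by~\eqref{eqn:get-joint-label}, the truth of $\varphi^i$ over a joint run depends only on the agent-$i$ component of that run. Combining these, I obtain $\Py^{\M^J}_{\pi}(\varphi^i) = \Py^{\M^i}_{\pi^i}(\varphi^i) \geq 1-\delta^i$ for $i\in\{1,2\}$.

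With the per-agent bounds transported to the joint model, the conclusion follows from a union bound on the complementary events. Since $\varphi^J = \varphi^1\wedge\varphi^2$, a run fails $\varphi^J$ exactly when it fails $\varphi^1$ or fails $\varphi^2$, so
\begin{align*}
\Py^{\M^J}_{\pi}(\neg\varphi^J) \leq \Py^{\M^J}_{\pi}(\neg\varphi^1) + \Py^{\M^J}_{\pi}(\neg\varphi^2) \leq \delta^1 + \delta^2 \leq \delta,
\end{align*}
whence $\Py^{\M^J}_{\pi}(\varphi^J) \geq 1-\delta$, and the chain of equalities from the first step delivers $\Rew^{\M^J}_{\pi}(c^{\M^{J\times}}) \geq 1-\delta$. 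Notably, the union bound needs no independence between the agents, so the genuinely delicate part is the marginalization in the second paragraph: one must argue carefully that the decentralized policy together with the product transition structure makes the agent-$i$ run distribution in $\M^J$ identical to its standalone counterpart, and that disjointness of the proposition alphabets lets $\varphi^i$ be read off from a single component. I expect this marginalization, rather than the final arithmetic, to be the main obstacle.
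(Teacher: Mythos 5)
Your proof is correct and follows essentially the same route as the paper's: invoke Theorem~\ref{equiv1} to convert the constraint rewards into satisfaction probabilities, then apply a union bound to the failure events $\neg\varphi^1$ and $\neg\varphi^2$. You are in fact more careful than the paper, which compresses the marginalization step you rightly flag as delicate---that $\Py^{\M^J}_{\pi}(\neg\varphi^i) = \Py^{\M^i}_{\pi^i}(\neg\varphi^i)$, justified by the product transition kernel \eqref{eqn:joint-prob-model}, the decentralized policies, and the disjoint proposition alphabets via \eqref{eqn:get-joint-label}---into a single unexplained inequality in its union-bound chain.
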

\begin{proof}
Recall from Theorem~\ref{equiv1} that
$\mathcal{R}^{\M^\times}_{\pi}(c^\times) = \Py_{\pi}^{\M}(\varphi)$.
We find the probability of failure for the joint specification:
\begin{align*}
    \begin{split}
    \Py_{\pi}^{J\times} (\neg \varphi_{J})
        &= \Py_{\pi}^{J\times} (\neg (\varphi^1 \land \varphi^2))
        = \Py_{\pi}^{J\times} (\neg \varphi^1 \lor \neg \varphi^2) \\
        &\leq \Py_{\pi^1}^{1\times} (\neg \varphi^1) +
            \Py_{\pi^2}^{2\times} (\neg \varphi^2)
        \leq \delta^1 + \delta^2 \leq \delta.
        \end{split}
\end{align*}
The joint specification is met w.p. at least \(\delta\).
\end{proof}

The construction of the independent, AG policy
proceeds as follows.
Taking \(x_h(s,a)\) and \(y_h(s,a)\) to be the occupancy measures corresponding to agents \(\M^{1\times}\) and \(\M^{2\times}\),
the optimization problem can be written as an ``adversarial'' \(\max \min\) formulation. 
In the outer problem,
constraints ensuring the occupancy measure validity
are equivalent to~\cref{eqn:LP-single-non-neg-occupancy,eqn:LP-single-prob-flow-occupancy,eqn:LP-single-init-state} when replacing \(q, S^\times, A^\times, H^\times, P^\times, \text{ and } s^\times_1\), by \(x, S^{1\times}, A^{1\times}, H^{1\times}, P^{1\times}, \text{ and } s^{1\times}_1\), respectively.
They are written as \(o_1, o_2, \text{ and } o_3\) for brevity.
The constraint enforcing the satisfaction of the logical specification is \(z_1\), and it is found by making
the same replacements and additionally replacing
\(c_h^\times, \delta\) with \(c_h^{1\times}, \delta^1\)
in~\eqref{eqn:LP-single-constraint-reward}.
Together, these constraints complete
the outer optimization to yield
\begin{align*}
\underset{x}{\max}  \quad &f(x), \numberthis \label{eqn:max-min-problem} \\ 
   \mathrm{s.t.} \quad & 
   o_1, o_2, o_3, z_1
\end{align*}

The objective function \(f(x)\) of~\eqref{eqn:max-min-problem} is found by solving the inner optimization
problem.
Constraints for the inner problem are similarly formed from~\cref{eqn:LP-single-non-neg-occupancy,eqn:LP-single-prob-flow-occupancy,eqn:LP-single-init-state} by
replacing \(q\) with \(y\) and by replacing the \(\M^\times\) elements with those
corresponding to the second agent \(\M^{2\times}\), e.g., \(S^\times\) with \(S^{2\times}\), to yield occupancy measure constraints \(o_4, o_5, \text{ and } o_6\) and mission constraint \(z_2\).
The inner problem becomes:
\begin{align*}
f(x) = \min_{y} \quad & g(x, y), \numberthis \label{eqn:max-min-inner-problem} \\
\mathrm{s.t.} \quad & 
o_4, o_5, o_6, z_2
\end{align*}
where $g(x,y)$ is
\begin{equation}
    g(x,y) = \smashoperator[lr]{\sum_{s^1 \in S^{1\times}, a^1 \in A^{1\times},\atop s^2 \in S^{2\times}, a^2 \in A^{2\times}, h \in [1:H^{\times}]}} r_{h}^{\times}(s^1,s^2,a^1,a^2)x_h(s^1,a^1)y_h(s^2,a^2). \label{eqn:max-min-inner-objective}
\end{equation}

From the perspective of
\(\M^{1\times}\),
this minimization of the joint objective reward
corresponds to the worst-case behavior of $\M^{2\times}$ subject to the constraint that $\varphi^2$ is satisfied by $\M^{2\times}$.
The inclusion of the of logical constraint
\(z_2\)
effectively blunts the
worst-case by restricting
the policy space of agent \(\M^{2\times}\).%

Solving~\eqref{eqn:max-min-problem} for $\M^{1\times}$ yields an AG optimal policy $\pi^{1*}$, 
induced by occupancy measure $x^*$, and
an associated optimal value $v_x^*$.
Similarly, solving an analogous problem for $\M^{2\times}$ yields $\pi^{2*}$, occupancy 
measure $y^*$, and optimal value $v_y^*$.
\begin{theorem}[Lower Bound on the Achieved Objective Reward]\label{lemma:ag-mdp-objective-lower-bound}
The objective rewards $v_x^*$ and $v_y^*$ returned by the AG optimization problems for $\M^{1\times}$ and $\M^{2\times}$ each provide a lower bound on the joint reward achieved when executing  policy $\pi = (\pi^{1*}, \pi^{2*})$.
\label{thm:objective-lower-bound}
\end{theorem}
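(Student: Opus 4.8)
The plan is to prove each of the two inequalities, $v_x^* \le g(x^*,y^*)$ and $v_y^* \le g(x^*,y^*)$, directly from the definition of the optimal values as inner minima, exploiting the fact that the jointly executed occupancy measures form a \emph{feasible} point of the respective inner adversarial minimizations. The engine of the argument is simply that a minimum over a feasible set cannot exceed the objective evaluated at any particular feasible point.

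First I would identify the quantity to be bounded. Because the two agents act independently with independent transitions, executing $\pi = (\pi^{1*}, \pi^{2*})$ induces a joint occupancy measure that factorizes as the product of the marginals, $x^*_h(s^1,a^1)\,y^*_h(s^2,a^2)$. Hence the joint objective reward actually achieved under $\pi$ is precisely $g(x^*,y^*)$, where $g$ is the bilinear form in~\eqref{eqn:max-min-inner-objective}. So it suffices to show $v_x^* \le g(x^*,y^*)$ and $v_y^* \le g(x^*,y^*)$.

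Next, for agent $\M^{1\times}$, recall that $v_x^* = f(x^*) = \min_y g(x^*,y)$, where the minimization ranges over all $y$ satisfying the inner constraints $o_4, o_5, o_6, z_2$, i.e., the occupancy-measure validity constraints for $\M^{2\times}$ together with the mission constraint $z_2$ encoding $\Rew^{\M^{2\times}}_{\pi^2}(c^{\M^{2\times}}) \ge 1-\delta^2$. The key observation is that $y^*$, the solution of agent $2$'s \emph{outer} problem, has feasible region given by exactly these same constraints, so $y^*$ is itself a feasible point of agent $1$'s inner minimization. Therefore
\[
v_x^* = \min_{y} g(x^*,y) \le g(x^*,y^*).
\]
By the symmetric construction, $v_y^* = \min_x g(x,y^*)$ is taken over all $x$ satisfying agent $1$'s constraints $o_1, o_2, o_3, z_1$, and $x^*$—the solution of agent $1$'s outer problem—satisfies precisely those constraints, so $x^*$ is feasible for agent $2$'s inner minimization, giving $v_y^* \le g(x^*,y^*)$.

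The only step requiring genuine care is this cross-feasibility check: that each agent's optimal outer occupancy measure lies in the feasible region of the \emph{other} agent's inner problem. This is where the structure of the decomposition matters, and it follows from the consistency of the assume-guarantee constraints already noted in the paper—Guarantee~1 validates Assume~2 and vice versa. Concretely, the mission constraint that each agent guarantees on itself in its outer problem is identical to the mission constraint the partner assumes in its inner problem, so the feasible sets align exactly and $x^*, y^*$ transfer cleanly between the two optimizations. Once this alignment is in place, the bound is immediate and no further estimation is needed.
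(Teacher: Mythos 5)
Your proof is correct and is precisely the argument the paper intends: the paper omits the proof, saying only that it ``follows from the adversarial $\max\min$ formulation of~\eqref{eqn:max-min-problem},'' and your two steps---(i) the joint reward achieved by $\pi=(\pi^{1*},\pi^{2*})$ equals $g(x^*,y^*)$ because independent transitions and decentralized policies make the joint occupancy measure factorize, and (ii) $y^*$ (resp.\ $x^*$) is feasible for the other agent's inner minimization, so the inner $\min$ is bounded above by its value at that point---are exactly the details that fill in this one-line sketch. The cross-feasibility check you single out is indeed the substantive content, and you justify it correctly by noting that the mission constraint each agent guarantees in its outer problem coincides with the constraint the partner assumes in its inner problem.
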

The proof, omitted for brevity, follows from the adversarial, $\max \min$ formulation of~\eqref{eqn:max-min-problem}.

\subsection{Policy Synthesis as a Linear Program}
The nested \(\max \min\) formulation is computationally difficult to solve~\cite{Averbakh_Lebedev_2005}.
This challenge is overcome by computing the Lagrangian dual~\cite{boyd2004convex} of the inner \(\min\) LP,
allowing the adversarial problem~\eqref{eqn:max-min-problem}
to be rewritten as a \emph{single}
maximization.
We obtain the dual
of the inner minimization~\eqref{eqn:max-min-inner-problem}
in terms of the new dual variables
\(\lambda^1_h(s), \lambda^2(s),\) and \(\lambda^3\) corresponding to
the constraints
\(o_5, o_6, \text{ and } z_2\)
of the primal problem, respectively.
The dual constraints are \(\ell_1, \ell_2, \text{ and } \ell_3\).
Putting
together the original outer problem constraints, the dualized inner constraints, and the dualized inner objective,
the final LP formulation to find the optimal, independent AG policy for agent \(\M^1\)
becomes~\eqref{eqn:final-LP-all-together}.
\begin{tcolorbox}
\begin{align*}
    \max_{x,\lambda^3 \geq 0,\lambda^1,\lambda^2}  \quad & \lambda^2(s_{1}^{{2}{\times}}) + (1-\delta^2)\lambda^3,  \numberthis \label{eqn:final-LP-all-together} \\
    \mathrm{s.t.} \quad 
    & o_1, o_2, o_3, z_1 \\
    & \ell_1, \ell_2, \ell_3
\end{align*}
\end{tcolorbox}

\subsection{Linear Program Size Comparison}
We compare the scaling of the LP size
for the AG approach against the monolithic
construction.
Because two optimization problems are solved in the AG framework, the number of variables scales 
with the larger of the product state-action spaces of the two agents.
Assuming that \(\M^{1\times}\) has the larger state-action space,
the order of the number of LP variables is given below for each approach.
\begin{align}
\begin{split}
    \text{AG:} \ & H \cdot |S^{1}| \cdot  |A^{1}| \cdot |Q^{1}|  \\
    \text{Monolithic:} \ 
    & H \cdot |S^{1} | \cdot |S^{2}| \cdot |A^{1}| \cdot|A^{2}| \cdot|Q^{J}|
\end{split}
\nonumber
\end{align}
In the worst case, the maximum size of \(|Q^{J}|\) is given by \(|Q^{1}| \cdot |Q^{2}|\)~\cite{Gallier}.
The AG approach has the same number of constraints as variables.
In the monolithic case, the number of constraints does not depend on the action space, so the constraints scale
with \(H \cdot |S^{1} | \cdot |S^{2}| \cdot |Q^{J}|\).

\section{Experimental Results}
\label{sec:experimental_results}
We evaluate our compositional solution to the joint problem
against the monolithic approach of Section~\ref{sec:soution_approach-single} 
in two experiments.
Our solution achieves near-optimal
objective
rewards while improving the execution time by an order of magnitude
and maintaining guarantees on constraint satisfaction.
The speedup results from avoiding the combinatorial explosion 
of considering multiple agents in a monolithic way, as evidenced by the size of the LPs used
for policy synthesis.

The runtimes include both the time required to create and the time to solve the optimization problems,
and the AG runtimes account for the two optimization problems (one for each agent).
Gurobi
is used to solve the LPs;
MONA is used to convert $\ltlf$ formulae into DFAs~\cite{monamanual2001}.
Results for both experiments are in Table~\ref{tab:exp1}.
\begin{table*}[t]
\centering
\begin{tabular}{c|cc|ccc|ccc}
\multirow{2}{*}{Gridworld Size} & \multicolumn{2}{c|}{LP Size (vars, cons)} & \multicolumn{3}{c|}{Runtime (s)} & \multicolumn{3}{c}{Achieved Reward} \\
             & Monolithic & AG    & Monolithic & AG & Speedup & Monolithic & AG  & Relative Optimality \\ \hline
4x4 & (512000, 20481) & (4609, 4609) & 111.23 & 2.18 & 50.96 \(\times\) & 30.95 & 30.29 & 97.86 \% \\
5x5 & (1328125, 53126) & (7651, 7651) & 692.42 & 4.96 & 139.64 \(\times\) & 32.95 & 32.29 & 97.99 \% \\
6x6 & (2916000, 116641) & (11665, 11665) & 3094.31 & 11.47 & 269.68 \(\times\) & 34.95 & 34.32 & 98.20 \% \\
7x7 & (5702375, 228096) & (16759, 16759) & 10983.09 & 22.76 & 482.49 \(\times\) & 36.95 & 36.37 & 98.42 \% \\
8x8 & (10240000, 409601) & (23041, 23041) & 34181.37 & 57.47 & 594.75 \(\times\) & 38.95 & 38.40 & 98.59 \% \\ \hline \\ 
\hline
4x4 & (1024000, 40961) & (6145, 6145) & 426.96 & 3.59 & 119.08 \(\times\) & 32.00 & 31.56 & 98.62 \% \\
5x5 & (2656250, 106251) & (10201, 10201) & 2685.76 & 8.29 & 323.90 \(\times\) & 34.00 & 33.84 & 99.54 \% \\
6x6 & (5832000, 233281) & (15553, 15553) & 12393.80 & 18.06 & 686.18 \(\times\) & 36.00 & 35.94 & 99.84 \%
\end{tabular}
\caption{Comparison of AG approach to monolithic for Experiment 1 (top) and Experiment 2 (bottom)}
\label{tab:exp1}
\end{table*}

\vspace*{-1.5mm}
\subsection{Experiment 1: Reach-Avoid Task on Gridworld}

Policy synthesis for 
a reach-avoid task on the gridworld depicted in Fig.~\ref{fig:grid4}
is the first point of comparison.
In this example, the two agents
\(\M^1\) and \(\M^2\)
initially start in the ``northwest'' corner of the gridworld at location (0, 0).
The objective of \(\M^1\) is to eventually reach its goal state marked by \(a\) while avoiding the obstacle at location \(b\).
Similarly, \(\M^2\) attempts to reach \(c\) while avoiding \(d\).
These missions are formalized by the \(\ltlf\) specifications:
\begin{align}
\begin{split}
    \varphi^1:\quad & \mathbf{F}(a) \wedge \mathbf{G}(\neg b)\\
    \varphi^2:\quad & \mathbf{F}(c) \wedge \mathbf{G}(\neg d).
\end{split}
\label{eqn:grid_spec}
\end{align}

The probability satisfaction thresholds are chosen as $(1-\delta^1) = (1-\delta^2) = 0.9 $ for the
individual agents, while a threshold of $(1 - \delta) = 0.8$ is set for the conjoined specification.
At every time step, agents have five available actions: four movement actions aligned to the cardinal directions (N, E, S, W) and a fifth \texttt{STAY} action.
On taking the \texttt{STAY} action, the agent remains in the same location w.p. 1.
For every movement action taken, let the probability of the agent actually \emph{moving} in that direction be \(p_\star\).
The agent moves in each adjacent direction of the chosen action w.p. \((1-p_\star) / 2\).
It is impossible to move directly opposite to the chosen action.
If the agent tries to move in a direction that is not possible (i.e., facing into edge of the gridworld), then the agent
remains in its original location.
Agent \(\M^1\) has a \(p_\star\) value of 0.9, while \(\M^2\) is slightly less reliable with \(p_\star=0.8\).

The rewards are defined jointly and encourage exploration by returning greater rewards for joint states in which $\M^1$ and $\M^2$ occupy different locations as shown
\begin{align}
\begin{split}
    r_h(s^1, s^2,a^1, a^2) &= 
    \begin{cases}
        2 & \text{if } s^1 \neq s^2 \\
        1 & \text{if } s^1 = s^2
    \end{cases}  \quad \forall \ h,a^1,a^2.
\end{split}
\label{eqn:joint_reward}
\end{align}
The episode length (\(H\)) is 15 for the \(4\times4\) sized gridworld, and it is
incremented by one for every additional row added to the gridworld.
For the larger gridworlds, the locations of the objectives and starting agent positions change to keep the same configuration
with respect to the ``edges'' for the gridworld, i.e.,
\(a\) is always in the southwest corner with \(b\) located one space to the north. The same holds for \(c\) and the northeast corner. 
We assess the scalability of our solution by varying the size of the gridworld and comparing the policy synthesis time against
the monolithic solution approach.

\begin{figure}[t]
    \centering
    \setkeys{Gin}{width=\linewidth}
    \begin{subfigure}{0.19\textwidth}
    \includegraphics*{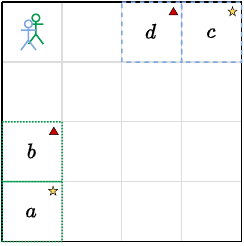} 
    \caption{\small Experiment 1.}
    \label{fig:grid4}
\end{subfigure}%
\hfil
    \begin{subfigure}{0.19\textwidth}
    \centering
    \includegraphics*{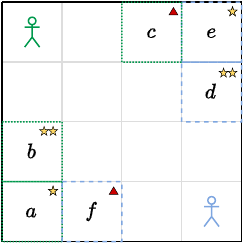} 
    \caption{\small Experiment 2.}
    \label{fig:exp2-grid4}
\end{subfigure}%
\caption{\small Gridworlds used in the \(4\times4\) experiments. Goal (reach) and obstacle (avoid) states corresponding to the \(\ltlf\) specification are denoted by stars and triangles, respectively. Locations with fewer stars should be visited first. The starting positions of the agents are marked by the stick figures. Positions corresponding to \(\M^1\) are marked in green, while those for \(\M^2\) are shown in light blue.}
\vspace*{-6mm}
\end{figure}

\vspace*{-1mm}
\subsection{Experiment 2: Ordered-Goal Navigation}
The second gridworld experiment features the same transition dynamics and episode lengths as Experiment 1, and the 
notable locations (initial positions and labeled locations) shift in the same manner as the grid size increases.
In this experiment, both agents have \(p_\star\) set to 0.95.
The probability satisfaction thresholds are set to $(1-\delta^1) = (1-\delta^2) = 0.95 $ and  $(1 - \delta) = 0.9$.
The mission specification for \(\M^1\) is to first
visit \(a\), then proceed to
location \(b\) while always avoiding \(c\), as shown below:
\begin{align}
\begin{split}
    \varphi^1:\quad & \mathbf{F}(b) \wedge \mathbf{G}(\neg c) \wedge (\neg b \ \mathbf{U} \ a)\\
    \varphi^2:\quad & \mathbf{F}(e) \wedge \mathbf{G}(\neg f) \wedge (\neg e \ \mathbf{U} \ d).
\end{split}
\label{eqn:grid_spec_exp2}
\end{align}
This simulates autonomous agents collecting and moving items to a new location (e.g., cargo to a warehouse) while
navigating around various obstacles.
The reward function is unchanged from the first experiment~\eqref{eqn:joint_reward}, again encouraging separation of the
agents to avoid over-crowding.

All experimental results demonstrate a relatively small optimality gap between the AG-based decomposition solution
and the monolithic approach.
Furthermore, this optimality gap is shown to tighten as the problem size grows, while the speedup enjoyed by the AG
approach continues to improve with respect to the monolithic problem.

\section{Conclusion}
\label{sec:conclusion}
We have introduced a novel, assume-guarantee (AG)
based methodology to 
split and solve logically constrained MDPs for multi-agent systems
with significant scalability improvements and an empirically
tight optimality gap.
The AG decomposition blunts the conservatism
while
providing provable guarantees on logical constraint
satisfaction.
Analytical quantification
of the optimality gap
and
the extension to unknown transition dynamics remain as future work.

\bibliographystyle{IEEEtran}
\bibliography{bib/ref,bib/background,bib/bib}

\begin{thebibliography}{10}
\providecommand{\url}[1]{#1}
\csname url@samestyle\endcsname
\providecommand{\newblock}{\relax}
\providecommand{\bibinfo}[2]{#2}
\providecommand{\BIBentrySTDinterwordspacing}{\spaceskip=0pt\relax}
\providecommand{\BIBentryALTinterwordstretchfactor}{4}
\providecommand{\BIBentryALTinterwordspacing}{\spaceskip=\fontdimen2\font plus
\BIBentryALTinterwordstretchfactor\fontdimen3\font minus
  \fontdimen4\font\relax}
\providecommand{\BIBforeignlanguage}[2]{{%
\expandafter\ifx\csname l@#1\endcsname\relax
\typeout{** WARNING: IEEEtran.bst: No hyphenation pattern has been}%
\typeout{** loaded for the language `#1'. Using the pattern for}%
\typeout{** the default language instead.}%
\else
\language=\csname l@#1\endcsname
\fi
#2}}
\providecommand{\BIBdecl}{\relax}
\BIBdecl

\bibitem{Lo_Ansari_2013}
C.-H. Lo and N.~Ansari, ``\BIBforeignlanguage{en}{Decentralized controls and
  communications for autonomous distribution networks in smart grid},''
  \emph{\BIBforeignlanguage{en}{IEEE Trans. on Smart Grid}}, Mar. 2013.

\bibitem{Di_Silvestre_Favuzza_Riva_Sanseverino_Zizzo_2018}
M.~L. Di~Silvestre \emph{et~al.}, ``\BIBforeignlanguage{en}{How
  decarbonization, digitalization and decentralization are changing key power
  infrastructures},'' \emph{\BIBforeignlanguage{en}{Renewable and Sustain.
  Energy Reviews}}, Oct. 2018.

\bibitem{Puterman:1994:MDP:528623}
M.~L. Puterman, \emph{{Markov} Decision Processes: Discrete Stochastic Dynamic
  Programming}, 1st~ed.\hskip 1em plus 0.5em minus 0.4em\relax John Wiley \&
  Sons, Inc., 1994.

\bibitem{altman1999constrained}
E.~Altman, \emph{Constrained {Markov} Decision Processes}.\hskip 1em plus 0.5em
  minus 0.4em\relax CRC Press, 1999.

\bibitem{de2013linear}
G.~De~Giacomo and M.~Y. Vardi, ``Linear temporal logic and linear dynamic logic
  on finite traces,'' in \emph{Int. Joint Conf. on Artif. Intell. (IJCAI)},
  2013.

\bibitem{zhu2017symbolic}
S.~Zhu, L.~M. Tabajara, J.~Li, G.~Pu, and M.~Y. Vardi, ``{Symbolic LTLf
  synthesis},'' in \emph{Int. Joint Conf. on Artif. Intell. (IJCAI)}, 2017.

\bibitem{kalagarla2024optimal}
K.~C. Kalagarla \emph{et~al.}, ``Optimal control of logically constrained
  partially observable and multi-agent {Markov} decision processes,''
  \emph{IEEE Trans. on Autom. Control}, 2024.

\bibitem{Daoui_Abbad_Tkiouat_2010}
C.~Daoui, M.~Abbad, and M.~Tkiouat, ``\BIBforeignlanguage{en}{Exact
  decomposition approaches for {Markov} decision processes: A survey},''
  \emph{\BIBforeignlanguage{en}{Advances in Operations Res.}}, 2010.

\bibitem{Shen_Ye_Feng_2021}
J.~Shen, X.~Ye, and D.~Feng, ``\BIBforeignlanguage{en}{A game-theoretic method
  for resilient control design in industrial multi-agent {CPSs} with
  {Markovian} and coupled dynamics},'' \emph{\BIBforeignlanguage{en}{Int. J. of
  Control}}, Nov. 2021.

\bibitem{zhou1998essentials}
K.~Zhou and J.~C. Doyle, \emph{Essentials of robust control}.\hskip 1em plus
  0.5em minus 0.4em\relax Prentice-Hall, 1998.

\bibitem{Castellini_Oliehoek_Savani_Whiteson_2021}
J.~Castellini \emph{et~al.}, ``\BIBforeignlanguage{en}{Analysing factorizations
  of action-value networks for cooperative multi-agent reinforcement
  learning},'' \emph{\BIBforeignlanguage{en}{Auton. Agents and Multi-Agent
  Syst.}}, Oct. 2021.

\bibitem{Busoniu_Babuska_De_Schutter_2010}
L.~Buşoniu, R.~Babuška, and B.~De~Schutter,
  \emph{\BIBforeignlanguage{en}{Multi-agent Reinforcement Learning: An
  Overview}}, ser. Stud. in Comput. Intell., 2010.

\bibitem{ContractMono}
A.~Benveniste \emph{et~al.}, ``Contracts for system design,'' \emph{Foundations
  and Trends in Electron. Des. Automat.}, 2018.

\bibitem{Bauer12}
S.~Bauer \emph{et~al.}, ``Moving from specifications to contracts in
  component-based design,'' in \emph{Fundam. Approaches to Softw. Eng.}, 2012.

\bibitem{aaai2021}
K.~C. Kalagarla, R.~Jain, and P.~Nuzzo, ``{A Sample-Efficient Algorithm for
  Episodic Finite-Horizon MDP with Constraints},'' in \emph{Proc. of the AAAI
  Conf. on Artif. Intell.}, 2021.

\bibitem{ltl_1977}
A.~Pnueli, ``The temporal logic of programs,'' in \emph{Annu. Symp. on
  Foundations of Comput. Sci. (sfcs 1977)}, 1977.

\bibitem{2018nuzzoCHASE}
P.~Nuzzo \emph{et~al.}, ``{CHASE}: Contract-based requirement engineering for
  cyber-physical system design,'' in \emph{Des., Automat. \& Test in Eur. Conf.
  \& Exhib. (DATE)}, 2018.

\bibitem{2014nuzzoPowerSystem}
------, ``A contract-based methodology for aircraft electric power system
  design,'' \emph{IEEE Access}, 2014.

\bibitem{Bakule_2008}
L.~Bakule, ``\BIBforeignlanguage{en}{Decentralized control: An overview},''
  \emph{\BIBforeignlanguage{en}{Annu. Reviews in Control}}, Apr. 2008.

\bibitem{Siljak_2011}
D.~D. Siljak, \emph{\BIBforeignlanguage{en}{Decentralized Control of Complex
  Systems}}.\hskip 1em plus 0.5em minus 0.4em\relax Courier Corporation, Dec.
  2011.

\bibitem{Averbakh_Lebedev_2005}
I.~Averbakh and V.~Lebedev, ``\BIBforeignlanguage{en}{On the complexity of
  minmax regret linear programming},'' \emph{\BIBforeignlanguage{en}{Eur. J. of
  Oper. Res.}}, Jan. 2005.

\bibitem{boyd2004convex}
S.~P. Boyd and L.~Vandenberghe, \emph{Convex optimization}.\hskip 1em plus
  0.5em minus 0.4em\relax Cambridge Univ. Press, 2004.

\bibitem{Gallier}
\BIBentryALTinterwordspacing
J.~Gallier, ``\BIBforeignlanguage{en}{Introduction to the theory of
  computation: Some notes for cis262},'' 2017. [Online]. Available:
  \url{https://www.cis.upenn.edu/~cis2620/notes/tcbook-u.pdf}
\BIBentrySTDinterwordspacing

\bibitem{monamanual2001}
\BIBentryALTinterwordspacing
N.~Klarlund and A.~M{\o}ller, \emph{{MONA Version 1.4 User Manual}}, BRICS,
  Dept. of Comput. Sci., Univ. of Aarhus, January 2001. [Online]. Available:
  \url{http://www.brics.dk/mona}
\BIBentrySTDinterwordspacing

\end{thebibliography}

\end{document}